\newcommand{\at}[2]{\begin{scope}[shift={#1}]#2\end{scope}}
\newcommand{\smalltext}[1]{\footnotesize{\begin{tabular}{@{}c@{}}#1\end{tabular}}}
\newcommand{\diagnosis}[2]{
   \at{#1}{\draw[dotted] (-1.4,-.4) rectangle (1.4,.4);}
   \draw #1 node {\smalltext{#2}};
}
\newcommand{\decision}[2]{
   \draw #1 ellipse (1 and 0.5);
   \draw #1 node {\smalltext{#2}};
}
   \newcommand\SkipToFmtEnd{}%
   \newcommand\EndFmtInput{}%
   \long\def\SkipToFmtEnd#1\EndFmtInput{}%
\newcommand\ReadOnlyOnce[1]{\@ifundefined{#1}{\@namedef{#1}{}}\SkipToFmtEnd}
\DeclareFontFamily{OT1}{cmtex}{}
\DeclareFontShape{OT1}{cmtex}{m}{n}
  {<5><6><7><8>cmtex8
   <9>cmtex9
   <10><10.95><12><14.4><17.28><20.74><24.88>cmtex10}{}
\DeclareFontShape{OT1}{cmtex}{m}{it}
  {<-> ssub * cmtt/m/it}{}
\DeclareFontShape{OT1}{cmtt}{bx}{n}
  {<5><6><7><8>cmtt8
   <9>cmbtt9
   <10><10.95><12><14.4><17.28><20.74><24.88>cmbtt10}{}
\DeclareFontShape{OT1}{cmtex}{bx}{n}
  {<-> ssub * cmtt/bx/n}{}
\newcommand{\Conid}[1]{\mathit{#1}}
\newcommand{\Varid}[1]{\mathit{#1}}
\newcommand{\anonymous}{\kern0.06em \vbox{\hrule\@width.5em}}
\renewcommand{\leq}{\leqslant}
\renewcommand{\geq}{\geqslant}
\newdimen\mathindent\mathindent\leftmargini}%
\def\resethooks{%
  \global\let\SaveRestoreHook\empty
  \global\let\ColumnHook\empty}
\newcommand*{\savecolumns}[1][default]%
  {\g@addto@macro\SaveRestoreHook{\savecolumns[#1]}}
\newcommand*{\restorecolumns}[1][default]%
  {\g@addto@macro\SaveRestoreHook{\restorecolumns[#1]}}
\newcommand*{\aligncolumn}[2]%
  {\g@addto@macro\ColumnHook{\column{#1}{#2}}}
\newcommand{\onelinecommentchars}{\quad-{}- }
\newcommand{\commentbeginchars}{\enskip\{-}
\newcommand{\commentendchars}{-\}\enskip}
\newcommand{\visiblecomments}{%
  \let\onelinecomment=\onelinecommentchars
  \let\commentbegin=\commentbeginchars
  \let\commentend=\commentendchars}
\newcommand{\invisiblecomments}{%
  \let\onelinecomment=\empty
  \let\commentbegin=\empty
  \let\commentend=\empty}
\newlength{\blanklineskip}
\newcommand{\hsindent}[1]{\quad}
\let\hspre\empty
\let\hspost\empty
\newcommand{\hsnewpar}[1]%
  {{\parskip=0pt\parindent=0pt\par\vskip #1\noindent}}
\newcommand{\hscodestyle}{}
\newcommand{\sethscode}[1]%
  {\expandafter\let\expandafter\hscode\csname #1\endcsname
   \expandafter\let\expandafter\endhscode\csname end#1\endcsname}
   \let\hspre\(\let\hspost\)%
   \let\hspre\(\let\hspost\)%
\newcommand{\plainhs}{\sethscode{plainhscode}}
\def\codeframewidth{\arrayrulewidth}
   \let\endoflinesave=\\
   \framedhslinecorrect\endoflinesave{.5ex}\hline
\newcommand{\framedhslinecorrect}[2]%
  {#1[#2]}
\def\column##1##2{}%
   \newcommand\>[1][]{}\newcommand\<[1][]{}\newcommand\\[1][]{}%
   \def\fromto##1##2##3{##3}%
\let\orighscode=\hscode
   \let\origendhscode=\endhscode
   \def\endhscode{\def\hscode{\endgroup\def\@currenvir{hscode}\\}\begingroup}
\def\hscode{\endgroup\def\@currenvir{hscode}}}%
   \global\let\hscode=\orighscode
   \global\let\endhscode=\origendhscode}%
\newif\iffinal\finalfalse
\newenvironment{itize}{\begin{list}{$\bullet$}{\parsep=0pt\parskip=0pt\topsep=1ex\itemsep=0pt}}{\end{list}}
\newcommand{\ignore}[1]{}
\title{Providing Hints, Next Steps and Feedback \\ in a Tutoring System for Structural Induction}
\author{Josje Lodder  \institute{Faculty of Management, Science and Technology\\ Open University of the Netherlands\\
Heerlen, The Netherlands}  \email{josje.lodder@ou.nl}\and
Bastiaan Heeren 
\institute{Faculty of Management, Science and Technology\\ Open University of the Netherlands\\
Heerlen, The Netherlands}
\email{\quad bastiaan.heeren@ou.nl}
\and
Johan Jeuring 
\institute{Department of Information and Computing Sciences, \\ Universiteit Utrecht, The Netherlands}
\email{\quad j.t.jeuring@uu.nl}
}
\begin{document}
\maketitle

\begin{abstract}
Structural induction is a proof technique that is widely used to prove statements about discrete structures. Students find it hard to construct inductive proofs, and when learning to construct such proofs, receiving feedback is important. In this paper we discuss the design of a tutoring system, LogInd, that helps students with constructing stepwise inductive proofs by providing hints, next steps and feedback. As far as we know, this is the first tutoring system for structural induction with this functionality. We explain how we use a strategy to construct proofs for a restricted class of problems. This strategy can also be used to complete partial student solutions, and hence to provide hints or next steps. We use constraints to provide feedback. A pilot evaluation with a small group of students shows that LogInd indeed can give hints and next steps in almost all cases.
 
\end{abstract}
\section{Introduction}
Discrete structures play an important role in many domains, and are foundational for mathematics, logic, and computer science. Examples of such structures are natural numbers, data structures such as lists and trees, but also complex structures such as programming languages. Structural induction is a proof technique that is widely used to prove statements about inductively defined, discrete structures. Mathematical induction can be viewed as a special kind of structural induction, using the inductive definition of the natural numbers as the underlying structure. 

Because discrete structures and structural induction are foundational for mathematics and computer science, they form an integral part of educational programs. For example, proof techniques are part of the ACM Computer Science curriculum.\footnote{\url{https://www.acm.org/binaries/content/assets/education/cs2013_web_final.pdf}} Courses that address proof techniques often require students not only to learn how to prove consequences in a formal system, but also to reason about formal systems, and to independently construct a proof for a statement. A typical example of an exercise that occurs in many textbooks on logic and proof techniques is the following: prove that the number of left parentheses in a logical formula is equal to the number of right parentheses. Such a property can be proved by structural induction, where the structure of the proof follows the structure of the inductive definition of the logical language. Students have to learn this proof technique to construct more fundamental proofs, such as the soundness of a proof system. 
Textbooks and teachers typically instruct students on how to do this, provide some examples, and then let students practice with constructing proofs themselves. As with learning any subject, students need feedback when they are learning how to construct their own proofs~\cite{hattietimperley}. Such feedback can take several forms: it may be about the progress of a student, about recommending a next task to solve, or about the difference between the proof constructed by a student and an expected proof. In this paper we focus on the latter kind of feedback.

This paper discusses the design of a tutoring system for practicing proving statements about inductively defined, discrete structures. Some core features of the system are that it gives feedback on the steps a student takes towards a solution, and hints about which step to take next. We use the knowledge about misconceptions students have to determine what feedback to give. Students have some freedom in setting up their proof, and the system helps in reaching the learning goal of independently constructing a proof for statements about inductively defined discrete structures. Two advantages of our system are that it gives immediate feedback, and that it is scalable because feedback is calculated automatically. 


This paper is organized as follows. After introducing our terminology in Section~\ref{terminology}, we discuss related work in Section~\ref{related}. Our research is partly motivated by students' problems with induction, discussed in Section~\ref{problems}.
Section~\ref{interface} describes the interface and functionality of LogInd, and in Section~\ref{hintgeneration} we show how this functionality is realized. A pilot experiment is discussed in Section~\ref{evaluation}, and we conclude in Section~\ref{conclusion} with conclusions and ideas for future work.

\section {Terminology} 
\label{terminology}
We use an example exercise to illustrate the terminology concerning inductive proofs that we use in this paper. The text of the exercise is: 

``The propositional language \ensuremath{\Conid{L}} has atoms \ensuremath{\Varid{p},\Varid{q},\Varid{r}}, ... and connectives \ensuremath{\neg}, \ensuremath{\mathrel{\wedge}} and \ensuremath{\to }. We define two functions on this language: a function \ensuremath{\Varid{prop}} counting all occurrences of propositional letters, and a function \ensuremath{\Varid{bin}} counting the number of binary connectives. These functions are inductively defined by:
\begin{hscode}\SaveRestoreHook
\column{B}{@{}>{\hspre}l<{\hspost}@{}}%
\column{3}{@{}>{\hspre}l<{\hspost}@{}}%
\column{22}{@{}>{\hspre}c<{\hspost}@{}}%
\column{22E}{@{}l@{}}%
\column{25}{@{}>{\hspre}l<{\hspost}@{}}%
\column{E}{@{}>{\hspre}l<{\hspost}@{}}%
\>[3]{}\Varid{prop}\;(\Varid{p}){}\<[22]%
\>[22]{}\mathrel{=}{}\<[22E]%
\>[25]{}\mathrm{1}{}\<[E]%
\\
\>[3]{}\Varid{prop}\;(\neg\phi){}\<[22]%
\>[22]{}\mathrel{=}{}\<[22E]%
\>[25]{}\Varid{prop}\;(\phi){}\<[E]%
\\
\>[3]{}\Varid{prop}\;(\phi\;\Box\;\psi){}\<[22]%
\>[22]{}\mathrel{=}{}\<[22E]%
\>[25]{}\Varid{prop}\;(\phi)\mathbin{+}\Varid{prop}\;(\psi){}\<[E]%
\\[\blanklineskip]%
\>[3]{}\Varid{bin}\;(\Varid{p}){}\<[22]%
\>[22]{}\mathrel{=}{}\<[22E]%
\>[25]{}\mathrm{0}{}\<[E]%
\\
\>[3]{}\Varid{bin}\;(\neg\phi){}\<[22]%
\>[22]{}\mathrel{=}{}\<[22E]%
\>[25]{}\Varid{bin}\;(\phi){}\<[E]%
\\
\>[3]{}\Varid{bin}\;(\phi\;\Box\;\psi){}\<[22]%
\>[22]{}\mathrel{=}{}\<[22E]%
\>[25]{}\Varid{bin}\;(\phi)\mathbin{+}\Varid{bin}\;(\psi)\mathbin{+}\mathrm{1}{}\<[E]%
\ColumnHook
\end{hscode}\resethooks
where \ensuremath{\Varid{p}} is an atom, and \ensuremath{\Box} is \ensuremath{\mathrel{\wedge}} or \ensuremath{\to }. Prove with induction that \ensuremath{\Varid{prop}\;(\phi)\mathrel{=}\Varid{bin}\;(\phi)\mathbin{+}\mathrm{1}} for any formula \ensuremath{\phi} in the language \ensuremath{\Conid{L}}.''

The statement \ensuremath{\Varid{prop}\;(\phi)\mathrel{=}\Varid{bin}\;(\phi)\mathbin{+}\mathrm{1}} in the last sentence is the \textit{theorem} or \textit{property} that has to be proven. The structure of an inductive proof for this theorem can be deduced from the inductive definition of the language \ensuremath{\Conid{L}}. 
The \textit{base case} consists of a proof of the theorem for atomic formulae. There is an \textit{inductive case} for each of the connectives in the language. For instance, a proof of the conjunction case is a proof that from the assumption that if the theorem holds for \ensuremath{\phi} and \ensuremath{\psi} (i.e.~\ensuremath{\Varid{prop}\;(\phi)\mathrel{=}\Varid{bin}\;(\phi)\mathbin{+}\mathrm{1}} and \ensuremath{\Varid{prop}\;(\psi)\mathrel{=}\Varid{bin}\;(\psi)\mathbin{+}\mathrm{1}}) it follows that the theorem also holds for \ensuremath{\phi\mathrel{\wedge}\psi} (i.e.~\ensuremath{\Varid{prop}\;(\phi\mathrel{\wedge}\psi)\mathrel{=}\Varid{bin}\;(\phi\mathrel{\wedge}\psi)\mathbin{+}\mathrm{1}}). The assumption that the theorem holds for some arbitrary formulae \ensuremath{\phi} and \ensuremath{\psi} is the \textit{induction hypothesis}. A \textit{subproof} is a part of the complete inductive proof where a base case or inductive case is proven.

\section{Related work}
\label{related}
Tools that assist a user in constructing structural induction proofs may have different functionalities or purposes. In this section we describe four different kinds of tools: automated theorem provers, proof assistants with didactic functionality, e-learning tools for mathematical induction, and e-learning tools for structural induction.

Bundy~\cite{bundy2001} states that G\"odel's incompleteness theorem implies that it is impossible to construct a completely automatic inductive theorem prover, and that Kreisel's result on cut-elimination~\cite{kreisel} implies that inductive proofs in general will need intermediate lemmas. Hence, automated theorem provers for induction are proof assistants that use several heuristics to try to automatically prove theorems as much as possible. The classical literature on automated theorem proving, for example the handbook of Boyer-Moore~\cite{Boyer98}, describes strategies to find structural induction proofs. This includes different ways of using the induction hypothesis (weak and strong fertilization), selection of the induction variable, and the recognition of the need for extra lemmas. More recent research adds the use of rippling as an important technique~\cite{Bundy2005}. To use an automated theorem prover a user should have a thorough understanding of inductive proofs. Because the concept of induction is the learning goal in courses on proof techniques, using such provers in education is in general not very helpful.

Proof assistants such as Tutch~\cite{Abel2001} and Minifn~\cite{Osera} have been developed for educational purposes. Tutch is based on a high-level proving language, which allows step sizes resembling the steps in a pen-and-paper proof. Minifn tries to integrate functional programming with mathematical induction in a way that students can quickly learn how to use the proof assistant. Although these proof assistants are much easier to use than regular theorem provers, they remain assistants. They do not offer exercises, nor do they provide feedback or hints. 

A couple of e-learning tools support learning mathematical induction. In EAsy, an e-assessment system, a student practices with different kinds of proof exercises, using rules and proof strategies from a drop-down menu~\cite{Gruttmann2008}. For a problem requiring a proof by induction, the system pre-structures the proof in a base case and an inductive case, and it provides the induction hypothesis. The system performs a selected rule, which ensures that a student cannot make a mistake in applying a rule. Since there are quite a lot of rules and the exercises are non-trivial, a student easily can get lost: in an evaluation 41\% of the students mentions having problems in selecting the right rule from the extensive ruleset. Completed proofs are graded automatically, but incomplete proofs have to be graded by hand. EAsy shows a student that she completed a proof or subproof successfully, but does not provide any other feedback. 

In the Intelligent Book, a student practices exercises on mathematical induction~\cite{Billingsley2007}. This e-learning system uses MathTiles, predefined templates, which have to be completed by the student. An automated theorem prover, Isabelle, is used in the backend to check correctness and provide simplifications. As in EAsy, the system automatically generates the goals for the different cases. The authors state that this is necessary since Isabelle only accepts these goals when they are exactly equal to the goals in the prover. Simplification can be performed automatically, but when, for example, an application of the induction hypothesis is needed, simplification is not accepted. The tool provides hints in some cases, based on teacher scripts. 

ComIn-M contains electronic exercise sheets~\cite{ComInM}. The system is developed as part of the SAiL-M project, and extends an earlier e-assessment tool~\cite{Mueller2010}. These sheets also contain pre-structured mathematical induction exercises, and combine multiple choice exercises about stating the induction hypothesis with open exercises to prove a base case and an inductive case. Students have to state the inductive case before proving it. A nice feature of ComIn-M is the possibility to work in two directions. Feedback and hints are provided automatically.

The  website\footnote{\url{http://higheredbcs.wiley.com/legacy/college/ensley/0471476021/anim_flash/index.html}} accompanying the textbook Discrete Mathematics: Mathematical Reasoning and Proof with Puzzles, Patterns and Games by D. Ensley and W. Crawley~\cite{deensley} contains a set of interactive exercises in which given a property \ensuremath{\Conid{P}}, a student first has to deduce \ensuremath{\Conid{P}\;(\Varid{n}\mathbin{+}\mathrm{1})} from \ensuremath{\Conid{P}\;(\Varid{n})} for concrete values, and then for an arbitrary value \ensuremath{\Varid{n}}. In this way, a student gets some intuition behind induction before she proves the inductive case.

We found only two systems addressing structural induction. The most extensive system is Polycarpou's e-book, which is a complete educational environment for learning structural induction. She emphasizes foundational concepts, such as structures, sets and closed sets. A separate chapter introduces inductive definitions. Animations show how these definitions generate inductive sets. Interactivity is restricted to multiple choice, drag and drop, and fill in the blank exercises, which implies that a student does not independently complete an inductive proof. 

Stanford's online Introduction to Logic course\footnote{\url{http://intrologic.stanford.edu/public/index.php}} does offer the possibility to construct complete structural induction proofs in its open online logic course. However, in the course material induction is combined with natural deduction, causing long and abstract proofs. It is possible to ask for a complete solution to an exercise, but the system does not give feedback on mistakes, nor hints on how to proceed. The site also offers a proof assistant that combines structural induction with Hilbert-style proofs. 

\section{Students' problems with structural induction}
\label{problems}
Students have problems with constructing inductive proofs. Several studies identify misconceptions students have with mathematical induction, and analyze the underlying reasons for these misconceptions~\cite{Avital1978,  Dubinsky, ernest, Harel01, palla, Pavlekovic}. In her thesis, Polycarpou studies possible causes of problems students have with structural induction~\cite{Polycarpou}. Her hypothesis is that a lack of understanding of set theoretical concepts is one of the main causes of these problems. To investigate this hypothesis, she performed an experiment in which students have to answer six questions about a fancy inductively defined language IPO (a fictive programming
language). The base elements of this language are lower case characters. There are two inductive rules to construct new words: by concatenating two IPO words by an underscore, and by putting quotes around an IPO word. The first four questions test whether a student understands this definition: the student has to indicate which words are IPO words in a given set of words, give the minimal length of an IPO word, determine whether or not IPO words have a maximum length, and construct an IPO word of length greater than 6. The fifth question prepares for the last question by asking if it is possible to construct a word of length 8. Finally, in the last question students have to tell whether an IPO word can have even length, and in case the answer is no, prove that all IPO words have odd length. Students participating in this experiment are enrolled in a course `Logic for Computer Science' and they have practiced with mathematical induction in an earlier course on discrete mathematics. As a pre test, students have to answer these exercises two months before the lessons on induction. After these (traditional classroom) lessons a comparable set of exercises is given as a post test.

In the pre test, students particularly experience problems with the exercise on identifying IPO words, the problem about a word with length 8, and the inductive proof. The results on these questions are much better in the post test, but still the inductive proof is too hard for 44\% of the students. The author claims that there is a correlation between  performance on the first five questions and performance on the inductive proof, but this correlation is not statistically motivated.  Instead, she calculates the ratio between students who perform well on both the inductive set exercises and the inductive proof exercise (71\%), and the ratio between students who do not perform well on both (83\%). She notes that some students find an (incorrect) IPO word with length 8, but manage to prove that all IPO words have odd lengths. Another observation is that some students seem to copy an example treated in the course without the necessary adaptations. These students define inductive cases for negation and conjunction, instead of for the IPO constructors. The conclusion of Polycarpou is that lack of understanding of an inductive definition is indeed a main cause of problems with induction, and that the traditional way of teaching results in procedural knowledge without conceptual understanding for some students.

To investigate whether students entering the master Computer Science at the Open University of the Netherlands, a distance learning university, experience similar problems, we analyze the solutions to a homework assignment of 20 of these students. Before being admitted to the Computer Science master program, students have to take a course in logic. Structural induction is one of the topics in this course. Unlike the students participating in Polycarpou's experiment, almost none of these students has experience with mathematical induction. Furthermore, their background in mathematics is usually weaker than that of bachelor students at a regular university. The homework assignment is optional, but gives extra credits at the exam. 

In the first part of the homework assignment (Exercise 1), students have to give an inductive definition of a function \ensuremath{\Varid{len}}, which returns the length of a propositional formula. The next question (Exercise 2) asks to give an inductive definition of a postfix function \ensuremath{\mathbin{*}} that rewrites all conjunctive subformulae \ensuremath{\phi\mathrel{\wedge}\psi} of a formula into the equivalent subformula \ensuremath{\neg(\neg\phi\mathrel{\vee}\neg\psi)}. The last question (Exercise 3) asks for an inductive proof of the following property: \ensuremath{\Varid{len}\;(\phi\mathbin{*})\leq \mathrm{3}\;\Varid{len}\;(\phi)\mathbin{-}\mathrm{2}} for all formulae \ensuremath{\phi}. This assignment differs from Polycarpou's: it does not test the understanding of inductively defined sets, but instead 
tests the understanding of inductively defined functions (in exercises 1 and 2). 

As Polycarpou, we expect correlations between performance on the first two exercises and the last exercise. Since the number of participating students is too low for a statistic test, we perform a similar calculation as Polycarpou. Students who do not receive full points for the first two exercises (11 students), are almost all (10) unable to complete the third exercise. However, from the students who receive full points (9) only 2 manage to complete the inductive proof. We conclude that for these students Exercise~3 is too difficult to complete without help, but most students have fewer problems with the inductive function definitions. Table~\ref{results1_2} shows the results on the first two exercises. Most students provide a correct definition, but some students add an induction hypothesis to this definition. Since students can make several mistakes, for example, `no correct cases' and `an incorrectly added induction  hypothesis', the sum of the percentages in Table~\ref{results1_2} (and also Table~\ref{results3}) is more than 100\%. 

We looked further into mistakes students made in the proof exercise. As shown in Table~\ref{results3}, the inductive part of the proof most often goes wrong, and the most common error is assigning a fixed length to a compound formula (for example, \ensuremath{\Varid{len}\;(\phi\mathrel{\wedge}\psi)\mathrel{=}\mathrm{3}}). The inductive definition does not seem the bottleneck, since 70\% of the students (see the columns `correct' + `correct use of IH, but incomplete' + `correct ind def, no use of IH' in Table~\ref{results3}) apply these definitions in their proof. The assumption \ensuremath{\phi\mathbin{*}\mathrel{=}\phi} made by some students could be a symptom of conceptual misunderstanding of an inductive definition, but could also be used because a student does not know how to apply the induction hypothesis. We conclude that although problems with inductive definitions may certainly play a role in the results in the inductive proofs, the understanding of the role of the induction hypothesis and the way this hypothesis can be used is perhaps a more important cause of problems. We think that the remedy of Polycarpou (an intelligent tutoring system that mainly focuses on theoretical foundations)  probably will not be the best solution for our students, since the concept of inductive sets does not seem to be their main problem, and her approach might be too theoretical. Instead we concentrate on an e-learning tool that guides students interactively through the construction of an inductive proof.

\begin{table}[t]
    \caption{Results for exercises 1 and 2 of the homework assignment}
    \label{results1_2}
    \begin{subtable}[t]{.5\linewidth}
      \centering
        \caption{Exercise 1}
        \begin{tabular}{lr}
            solution & (N = 20) \\
            \hline
            correct base and ind. cases & 80\% \\
            only correct base case & 5\% \\
            one missing case & 5\% \\
            no correct cases & 10\% \\
            incorrectly added IH & 15\% \\

        \end{tabular}
    \end{subtable}%
    \begin{subtable}[t]{.5\linewidth}
      \centering
        \caption{Exercise 2}
        \begin{tabular}{lr}
           solution & (N = 20) \\
            \hline
            correct & 60\% \\
            no base case & 30\% \\
            incorrect & 10\% \\
        \end{tabular}
    \end{subtable} 
\end{table}

\begin{table}[t]
\caption{Results and mistakes for Exercise 3 of the homework assignment (N = 20)}
\centering
\label{results3}
\begin{tabular}{lrrr}
solution&base&IH&induction\\
\hline
correct & 75\% & 50\% & 15\% \\
IH only for one formula \ensuremath{\phi}& -- & 20\% & \hspace*{1cm}-- \\
assumption \ensuremath{\phi} and \ensuremath{\psi} atomic & \hspace*{1cm}-- & \hspace*{1cm}-- & 35\% \\
assumption \ensuremath{\phi\mathbin{*}\mathrel{=}\phi} & -- & -- & 15\% \\
correct use of IH, but incomplete & -- & -- & 10\% \\
correct ind def, no use of IH & -- & -- & 45\% \\
IH as goal & -- & -- & 10\% \\
verbal intuitive argument & -- & -- & 5\% \\

\end{tabular}
\end{table}

\section{LogInd, a tool for teaching structural induction}
\label{interface}
This section describes LogInd, a tool that supports students with constructing inductive proofs. Experience with other intelligent tutoring systems for logic (LogEx for rewriting propositional formulae~\cite{jcal}, and LogAx for Hilbert-style axiomatic proofs~\cite{logax}) shows that students benefit from a system where they can enter solutions stepwise, get feedback after each step, and can ask for a hint or next step at any moment, or receive a worked-out solution. The possibility to add proof steps both backwards and forwards in these systems resembles the way an exercise is solved with pen and paper. We use the same approach in LogInd. Since some students have hardly any idea how to start an inductive proof, LogInd offers guidance in structuring the proof. We also want students to be aware of the kind of steps they perform in the proof, for example, applying the induction hypothesis or an inductive definition of a given function. Hence, LogInd asks for a justification at each step. Students do not have to justify simple calculation steps, such as distributing a multiplication over an addition, and we allow calculation steps at different levels of granularity. Therefore, LogInd checks calculations by normalizing the submitted expression.

LogInd guides a student through a proof by structuring the proof in three parts: a proof of the base case, stating the induction hypotheses, and a proof of the inductive cases. After presenting the exercise, LogInd asks the student first to state what is to be proven in the base case, see Figure~\ref{start}. 
If this is correct the student is asked to complete the proof of the base case, and to continue with stating the induction hypotheses, see Figure~\ref{base}. For the inductive cases, LogInd again first asks what the different cases are, and what has to be proven in these cases. A complete proof is shown in Figure~\ref{complete}. 
\begin{figure*}[t]
\center\includegraphics[width=11.8 cm]{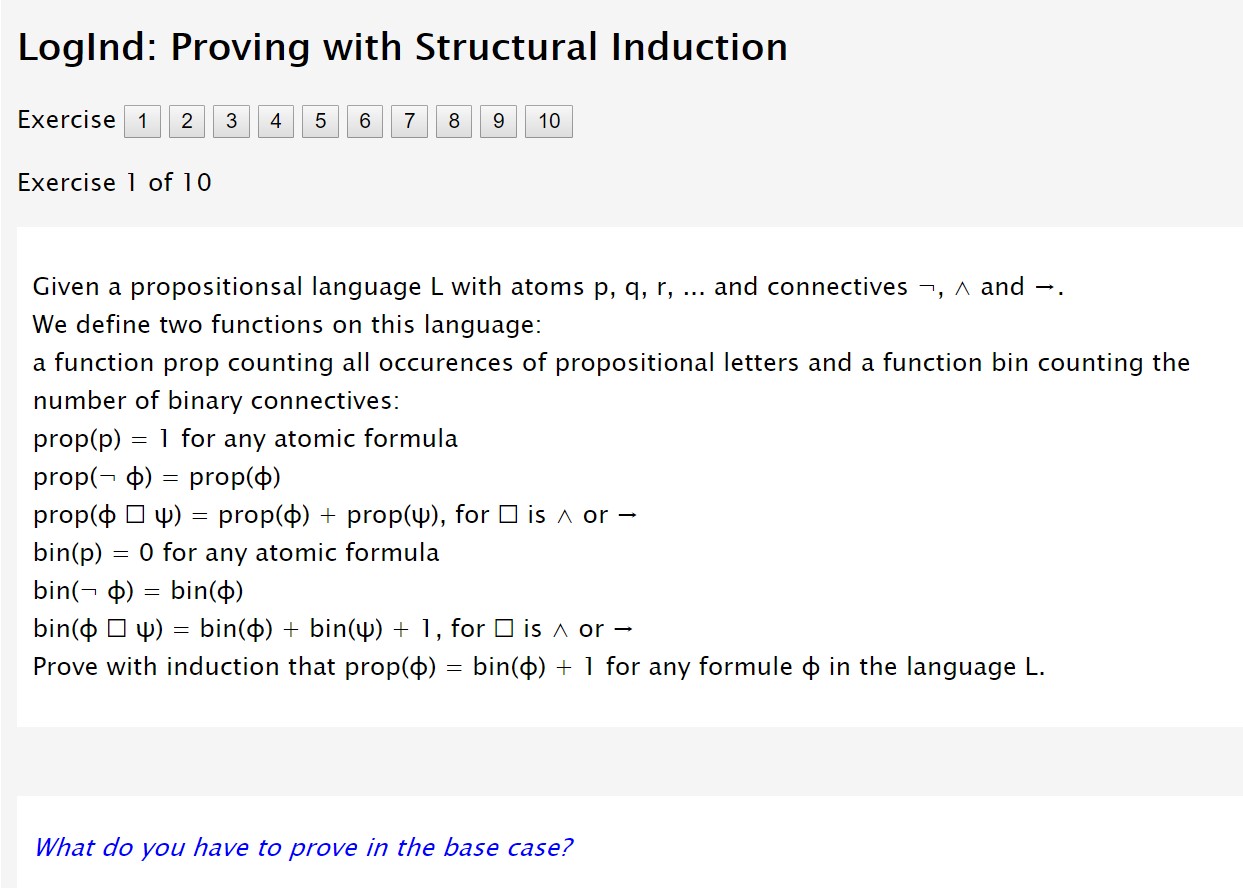}
\caption{Starting the first exercise in LogInd} 
\label{start} 
\end{figure*}

\begin{figure*}[t]
\center\includegraphics[width=6cm]{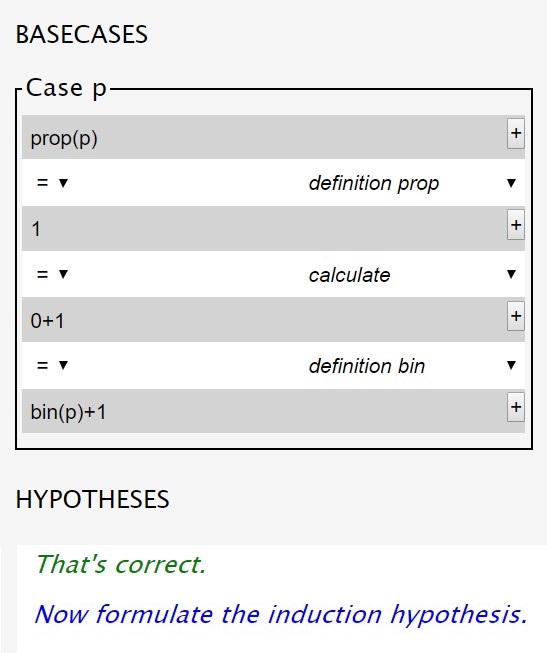}

\caption{Guidance after the base case is finished} 
\label{base} 
\end{figure*}
\begin{figure*}[t]
\center\includegraphics[width=14.15cm]{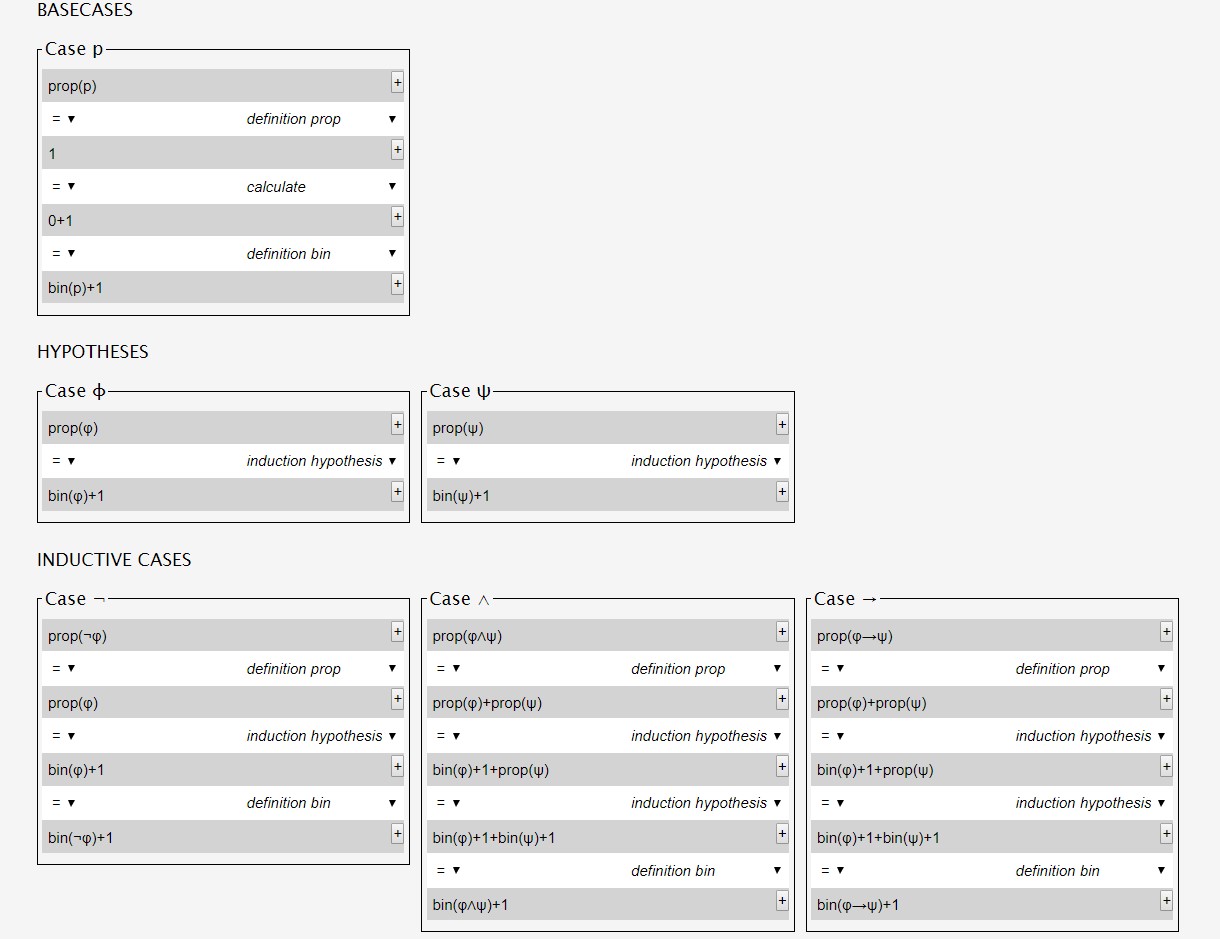}

\caption{The solution of the exercise of Figure~\ref{start} in LogInd} 
\label{complete} 
\end{figure*}

LogInd uses a domain reasoner to provide hints, next steps, feedback, and complete solutions. A domain reasoner is an expert module that performs all reasoning about the domain~\cite{Goguadze}. Thus far, LogInd only offers exercises about properties of a propositional language. We introduce the term `counting function' to describe the set of exercises that can be used in LogInd.

A counting function \ensuremath{\Varid{count}} is an inductively defined function such that
\begin{hscode}\SaveRestoreHook
\column{B}{@{}>{\hspre}l<{\hspost}@{}}%
\column{3}{@{}>{\hspre}l<{\hspost}@{}}%
\column{23}{@{}>{\hspre}l<{\hspost}@{}}%
\column{82}{@{}>{\hspre}l<{\hspost}@{}}%
\column{E}{@{}>{\hspre}l<{\hspost}@{}}%
\>[3]{}\Varid{count}\;(p_i){}\<[23]%
\>[23]{}\mathrel{=}c_i,{}\<[82]%
\>[82]{}c_i\;\in\;\mathbb N ,{}\<[E]%
\\
\>[3]{}\Varid{count}\;(\neg\phi){}\<[23]%
\>[23]{}\mathrel{=}\Varid{a}\mathbin{+}\Varid{b}\cdot\Varid{count}\;(\phi),{}\<[82]%
\>[82]{}\Varid{a},\Varid{b},\in\;\mathbb N {}\<[E]%
\\
\>[3]{}\Varid{count}\;(\phi\;\Box\;\psi){}\<[23]%
\>[23]{}\mathrel{=}a_{\Box}\mathbin{+}b_{\Box}\cdot\Varid{count}\;(\phi)\mathbin{+}c_{\Box}\cdot\Varid{count}\;(\psi),\quad{}\<[82]%
\>[82]{}a_{\Box},b_{\Box},c_{\Box}\;\in\;\mathbb N {}\<[E]%
\ColumnHook
\end{hscode}\resethooks
where \ensuremath{p_i} is a propositional letter, and \ensuremath{\Box} a binary connective. 

The properties that have to be proven take the following form: \ensuremath{P_1\;(\phi)} \textit{comp} \ensuremath{P_2\;(\phi)}, where \textit{comp} is a comparator (\ensuremath{\mathrel{=},\mathbin{<},\leq ,\mathbin{>},\geq }) and \ensuremath{P_i\;(\phi),\Varid{i}\mathrel{=}\mathrm{1},\mathrm{2}} is either a truth value, number or formula:
\begin{itize}
\item if \ensuremath{P_i\;(\phi)} is a truth value, it is an expression \ensuremath{\Conid{V}\;(\Varid{g}\;(\phi))} or  a constant where \ensuremath{\Conid{V}} is a valuation and \ensuremath{\Varid{g}} an inductive  function from the language \ensuremath{\Conid{L}} to \ensuremath{\Conid{L}};
\item if \ensuremath{P_i\;(\phi)} is a natural number, the right-hand side is a linear combination of expressions \ensuremath{\Varid{f}\;(\Varid{g}\;(\phi))} where \ensuremath{\Varid{f}} is a counting function and \ensuremath{\Varid{g}} an inductively defined function from \ensuremath{\Conid{L}} to \ensuremath{\Conid{L}}, the left-hand side is a single expression \ensuremath{\Varid{f}\;(\Varid{g}\;(\phi))};
\item if \ensuremath{P_i\;(\phi)} is a a formula, it is equal to an expression \ensuremath{\Varid{g}\;(\phi)} where \ensuremath{\Varid{g}} is an inductively defined function from \ensuremath{\Conid{L}} to \ensuremath{\Conid{L}};
\item valuations \ensuremath{\Conid{V}} may have predefined properties such as \ensuremath{\Conid{V}\;(\Varid{p})\mathrel{=}\mathrm{1}}, but if the comparator in the theorem is an equality, these properties may also only make use of equalities (since in our proof system it is not possible to prove an equality from inequalities). 
\end{itize}

\noindent
The restriction in the second option that the left-hand side is a single term \ensuremath{\Varid{f}\;(\Varid{g}\;(\phi))} while the right-hand side may contain a linear combination of terms is not a real restriction, since a statement where both the left- and right-hand side contain linear expressions can always be rewritten into this form. The restriction will enable us to treat the induction hypothesis as a rewrite rule. Examples of exercises in this format are:
\begin{itize}
\item Let \ensuremath{\Conid{L}} be a propositional language with connectives \ensuremath{\mathrel{\wedge}} and \ensuremath{\mathrel{\vee}}. Let \ensuremath{\Conid{ValA}} and \ensuremath{\Conid{ValB}} be two valuations such that \ensuremath{\Conid{ValA}\;(\Varid{p})\leq \Conid{ValB}\;(\Varid{p})} for any atomic formula \ensuremath{\Varid{p}}. Then \ensuremath{\Conid{ValA}\;(\phi)\leq \Conid{ValB}\;(\phi)} for any formula \ensuremath{\phi} in the language \ensuremath{\Conid{L}}.
\item Let \ensuremath{\Conid{L}} be a propositional language with connectives \ensuremath{\mathrel{\wedge}} and \ensuremath{\mathrel{\vee}}, and \ensuremath{\Conid{L'}} the extension of \ensuremath{\Conid{L}} with negation \ensuremath{\neg}. The function \ensuremath{\Varid{star}} from \ensuremath{\Conid{L}} to \ensuremath{\Conid{L'}} replaces every atom by its negation, conjunctions by disjunctions and disjunctions by conjunctions. The function \ensuremath{\Varid{length}} returns the length of a formula. The following holds: \ensuremath{\Varid{length}\;(\Varid{star}\;(\phi))\leq \mathrm{2}\cdot\Varid{length}\;(\phi)}
\item Let \ensuremath{\Conid{L}} be a propositional language with connectives \ensuremath{\neg}, \ensuremath{\mathrel{\wedge}}, \ensuremath{\mathrel{\vee}} and \ensuremath{\to }.  Function \ensuremath{\Varid{f}} replaces every conjunctive subformula \ensuremath{\phi\mathrel{\wedge}\psi} by \ensuremath{\neg(\neg\phi\mathrel{\vee}\neg\psi)}, and function \ensuremath{\Varid{g}} replaces every implicative subformula \ensuremath{\phi\to \psi} by \ensuremath{\neg\phi\mathrel{\vee}\psi}. Then \ensuremath{\Varid{f}\;(\Varid{g}\;(\phi))\mathrel{=}\Varid{g}\;(\Varid{f}\;(\phi))} (where `='  means syntactically equal) for any formula \ensuremath{\phi}.
\end{itize}
This class of problems offers sufficient possibilities for relatively simple exercises, where students can get acquainted with inductive proofs. In the next section we show that for this class we can generate solutions, hints and next steps, without the need for advanced techniques as used by automatic theorem provers.

\section{Generation of solutions, hints and next steps}
\label{hintgeneration}
In general, automatic proof generation for induction problems is undecidable~\cite{Aubin, bundy2001}. Problems that might seem easy, such as the proof for associativity of list concatenation for a single list 
(\mbox{\ensuremath{(\Varid{l}\mathbin{::}\Varid{l})\mathbin{::}\Varid{l}}} = \mbox{\ensuremath{(\Varid{l}\mathbin{::}\Varid{l})\mathbin{::}\Varid{l}}}), already need advanced methods to be proven automatically (in this case generalization of the first occurrence of \ensuremath{\Varid{l}})~\cite{bundy2001}. Automatic theorem provers use sophisticated techniques such as rippling and lemma generation to solve inductive problems~\cite{Bundy2005, bundy2001}. It is not our goal to teach students these techniques, and by restricting ourselves to the class of problems described in the previous section, we only need a straightforward strategy to solve such exercises.  

The problem-solving strategy we use is part of our domain reasoner. The strategy first uses the definition of the language to decide what has to be proven in the base cases, and which inductive cases have to be treated. We only allow a single formula variable in a property, so we do not have to ask which variable will be used for induction. Inductive functions are represented as rewrite rules, just as the induction hypothesis. Apart from some technical details, the strategy first applies the inductive definitions of the functions occurring in the statement. The strategy rewrites both the right-hand and left-hand side of the statement. 
Hence, the strategy supports the possibility to complete a subproof by working in two directions. In case the inductive function has natural numbers as the codomain, the next step (if necessary) is a distribution of the multiplication such that the left-hand side and right-hand side become linear combinations of terms that occur in the induction hypothesis. Now we can apply the induction hypothesis to occurrences of the left-hand side of this hypothesis in the left-hand side formula in the proof. After this application only some normalizing elementary calculations might be needed to complete the proof. For our restricted class of problems this strategy always finds a solution. We provide a sketch of a proof of this statement in Appendix~\ref{appendix}. Students will not always follow this strategy. For example, they might apply the induction hypothesis before rewriting the right-hand side of a statement applying the inductive definitions, or vary in the calculations. Also in these cases, LogInd continues with applying the strategy. Hence, a next step can be provided as long as the strategy's rules are applicable. After application of these rules, normalization is sufficient to complete the proof. We expect that almost all student steps will be such that LogInd can indeed  provide a hint or next step based on the student solution. The evaluation described in Section~\ref{evaluation} gives evidence for this claim.

Our strategy differs from the method advocated by Bundy~\cite{bundy2001}. The difference is in the way we use the induction hypothesis. Bundy recommends strong fertilization: the isolation of the induction hypothesis in an inductive case and replacement of this hypothesis by True. Our strategy uses weak fertilization: substituting the left-hand side of the induction hypothesis by the right-hand side. Bundy advises the use of strong fertilization since the use of weak fertilization generally results in longer and more complicated proofs. For our restricted class of problems, this is not the case. Since most pen-and-paper proofs use weak fertilization, we also use weak fertilization in our strategy.

\section{Constraints and feedback}
From the analysis of the homework assignment, we expect our students to make various kinds of mistakes while practicing with LogInd. Examples of potential mistakes are:
\begin{itize}
\item treating metavariables \ensuremath{\phi} and \ensuremath{\psi} as atoms, for example, resulting in the rewriting of \ensuremath{\Varid{length}\;(\phi\mathrel{\wedge}\psi)} into 3 in the proof of an inductive case;
\item omission of a case, for example negation;
\item use of only \ensuremath{\mathrel{=}} and \ensuremath{\leq } when a statement \ensuremath{P_1\;(\phi)\mathbin{<}P_2\;(\phi)} has to be proven;
\item forgetting to state the induction hypothesis before using it.
\end{itize}

In our other tutoring systems for logic we use buggy rules to generate feedback in case a student makes a mistake. Buggy rules typically relate to mistakes on the level of single steps. An example of such a buggy rule is forgetting to change a disjunction in a conjunction in an application of DeMorgan while rewriting a formula into normal form. Mistakes in an inductive proof can be on the level of a step, for instance rewriting \ensuremath{\Varid{length}\;(\phi\mathrel{\wedge}\psi)} to 3, but also on the level of a subproof. An example of an error in a subproof is using \ensuremath{\leq } between each of the steps when the goal is to prove an equality. In this case, each of the steps is correct, but the overall relation between the first and last line of the proof is \ensuremath{\leq } instead of \ensuremath{\mathrel{=}}. An example of an error on the level of the whole inductive proof is the omission of a case. These mistakes are easily formulated as constraint violations. For example, the composition of the relations between the lines in a proof should imply the relation between left-hand side and right-hand side in the theorem that is proven. We think that constraints can be put to good use for this domain.

Ohlsson~\cite{Ohlsson} first described the role of constraints in learning. Mitrovic used the concept in the development of an SQL tutor~\cite{Mitrovic2012} and many other tutors. Constraints characterize correct solutions by providing a relevance condition and a satisfaction condition: if the relevance condition holds, the solution should satisfy the satisfaction condition. Some important reasons for using constraints in the development of tutoring systems are that constraints partially play the role of buggy rules, the construction of which is very time consuming, and that constraints can also be used to give feedback if student solutions diverge from model solutions or are partial~\cite{Mitrovic2012, Mitrovic2007}.

LogInd uses constraints to provide feedback and guidance. Heeren and Jeuring~\cite{feedbackservicesjournal} describe the diagnose service used by a domain reasoner to provide feedback. Figure~\ref{fig:diagnose} is based on this description, and shows how we incorporate constraints in the diagnosis. Our diagnose service receives a (partial) student solution and checks whether or not this submission violates a set of constraints. We divide the constraints in constraints on the level of steps, on the level of subproofs, and on the level of proofs. 

\begin{itize}
\item Constraints on the level of steps check:
\begin{itize}
\item[--] if the rewriting of a line (for example the application of an inductive definition) is correct;
\item[--] if the induction hypothesis is applied correctly;
\item[--] if comparators (=, \ensuremath{\mathbin{<}} ..) are used correctly in a single step;
\item[--] if the justification is correct.
\end{itize}

\item Constraints on the level of subproofs check:
\begin{itize}
\item[--] if the first and last line of each subproof are instances of the left-hand side respectively right-hand side of the theorem;
\item[--] if these instantiations are valid cases (atomic base cases, inductive cases only for connectives in the language);
\item[--] if the induction hypothesis (when present) is correctly formulated;
\item[--] if comparators (=, \ensuremath{\mathbin{<}}, ..) are used correctly at the level of subproofs (i.e.~if the composition of the comparators in the subproof equals the comparator in the theorem).
\end{itize}

\item Constraints on the level of complete proofs check:
\begin{itize}
\item[--] if all inductive cases correspond to connectives in the language;
\item[--] if the induction hypothesis is stated before use, with the same metavariables;
\item[--] if an inductive case or special base case is missing. 
\end{itize}
\end{itize}

To check a proof at the level of a step, the domain reasoner compares the student submissions with the result of the application of possible rules, and accepts the student submission if it is similar to one of these results. Here, a simple normalizing calculation transforms the submission and the generated result into the same expression. For example, application of the induction hypotheses in the example of Figure~\ref{complete} results in \ensuremath{\Varid{bin}\;(\phi)\mathbin{+}\mathrm{1}\mathbin{+}\Varid{bin}\;(\psi)\mathbin{+}\mathrm{1}}, but a submission \ensuremath{\Varid{bin}\;(\phi)\mathbin{+}\Varid{bin}\;(\psi)\mathbin{+}\mathrm{2}} is also accepted.
 
If no constraint is violated, the domain reasoner compares the new submission with the previous (last correct) submission, and determines if these are similar. If these submissions are similar, the domain reasoner gives feedback about this. A submission that is not similar may follow the implemented strategy, which is diagnosed as `expected'. When the step does not follow the strategy, but is recognized by the domain reasoner, the diagnosis is a `detour'. This happens, for example, when a student starts with completing the inductive case for implication before negation. The interface will tell the student that this step is correct, and the student can continue with the exercise. Since there are no violations, every step in the submission is already recognized, which means that the last option (no rule detected) only happens when the student submission contains more than one new line. Again, the interface will provide a message `this is correct'. 
 
Our experience is that the diagnosis `failure' of a constraint is not enough to provide informative feedback. For example, a constraint on the exercise in Figure~\ref{start} could be that the first line of an inductive case should be an instantiation of the left-hand side of the theorem \ensuremath{\Varid{prop}\;(\phi)\mathrel{=}\Varid{bin}\;(\phi)\mathbin{+}\mathrm{1}}, with \ensuremath{\phi} substituted by \ensuremath{\neg\alpha}, \ensuremath{\alpha\mathrel{\wedge}\beta} or \ensuremath{\alpha\to \beta}, where \ensuremath{\alpha\;\neq\;\beta} and \ensuremath{\alpha,\beta\;\in\;\{\mskip1.5mu \phi,\psi\mskip1.5mu\}}. Now a student can violate this constraint in different ways, for example by
\begin{itize}
\item using a connective that is not in the language;
\item instantiating with an atomic formula;
\item instantiating with a metavariable that is not used in the induction hypothesis;
\item introducing an expression that is not an instantiation at all.
\end{itize}  
Each of these violations stem from a different misconception, and we want to give different feedback messages in each case. We solve this by specifying failure messages for different constraints,
and call this use of constraints `buggy constraints' as proposed by Kodaganallur et al.~\cite{Kodaganallur}. 
One of the problems of the use of constraints as mentioned by Kodagallur et al.~\cite{Kodaganallur} and Mitrovic \cite{Mitrovic2012} is the violation of two or more constraints at the same time. We solve this by ordering the constraints: for example, constraints about instantiations get a higher priority than constraints about the application of a rule.

Apart from `strong' constraints that may not be violated, we also use soft constraints to guide a student through a proof. These constraints check for each of the subproofs if they are introduced and if they are finished. After a diagnosis, the user interface can call the feedback service `constraints', which reports the status of each of the subproofs. The result can be used to provide a message such as: `the base case is finished, continue with the formulation of the induction hypothesis'.

\begin{figure}[t]
\begin{center}
\begin{tikzpicture}
\decision{(0,2.5)}{violation \\ constraint?}
\decision{(3,2.5)}{buggy \\constraint?}
\decision{(0,0)}{similar?}
\decision{(3,0)}{expected by \\ strategy?}
\decision{(6,0)}{discover \\ rule?}
\diagnosis{(6.4,2.5)}{Unknown mistake}
\diagnosis{(5.5,1.4)}{Common mistake}
\diagnosis{(0,-1.7)}{Small rewrite step,\\ not recognized}
\diagnosis{(3,-1.7)}{Rewrite step follows\\ expert strategy }
\diagnosis{(9.2,0)}{Multiple steps }
\diagnosis{(8.5,-1.1)}{Correct step, but\\ detour from strategy }
\draw[->] (1,2.5) -- (2,2.5); \draw (1.5,2.65) node {\smalltext{yes}};
\draw[->] (4,2.5) -- (5,2.5); \draw (4.5,2.65) node {\smalltext{no}};
\draw[->] (1,0) -- (2,0); \draw (1.5,0.15) node {\smalltext{no}};
\draw[->] (4,0) -- (5,0); \draw (4.5,0.15) node {\smalltext{no}};
\draw[->] (7,0) -- (7.8,0); \draw (7.4,0.15) node {\smalltext{no}};
\draw[->] (3,2) -- (3,1.4) -- (4.1,1.4); \draw[right] (3,1.7) node {\smalltext{yes}};
\draw[->] (6,-.5) -- (6,-1.1) -- (7.1,-1.1); \draw[right] (6,-.8) node {\smalltext{yes}};
\draw[->] (0,-.5) -- (0,-1.3); \draw[right] (0,-.9) node {\smalltext{yes}};
\draw[->] (3,-.5) -- (3,-1.3); \draw[right] (3,-.9) node {\smalltext{yes}};
\draw[->] (0,2) -- (0,.5); \draw[right] (0,1.25) node {\smalltext{no}};
\draw[->] (0,3.8) -- (0,3); \draw[right] (0,3.6) node {\smalltext{diagnose\\feedback service}};
\end{tikzpicture}
\end{center}
\caption{Structure of the diagnose feedback service: the incorporation of constraints is new}
\label{fig:diagnose}
\end{figure}
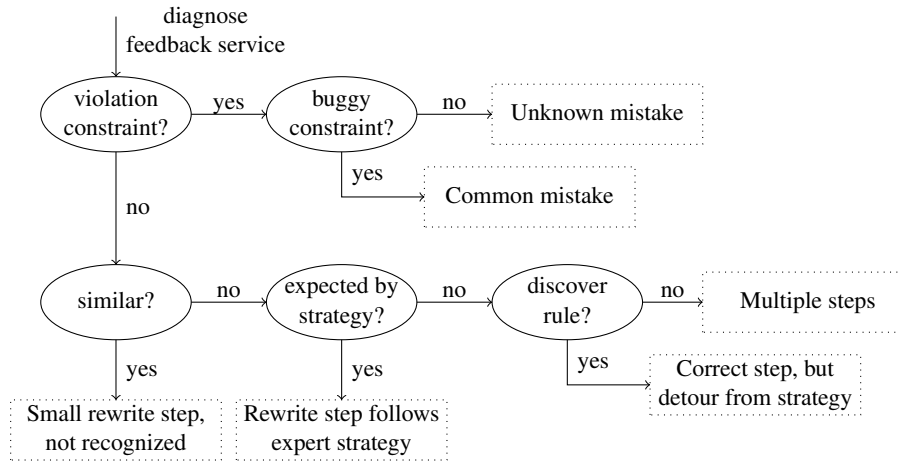

The way we use constraints in LogInd differs in some aspects from the original use. One difference is the fact that LogInd has a strategy which produces solutions, and while checking the next step of a student, LogInd first tries to recognize this step. LogInd can hence be conceived as a constraint-based solver as described in Kodaganallur et al.~\cite{Kodaganallur}. Moreover, we use constraints not only to provide feedback on errors, but also to guide a student.  

\section{Evaluation}
\label{evaluation}
In April 2019 we performed a small pilot experiment with a group of 15 students taking an online logic course in preparation of admission to the master in Computer Science at the Open University of the Netherlands. Before the experiment, these students handed in the homework assignment described in Section~\ref{problems}. The experiment consisted of an online instruction about the use of LogInd, followed by the possibility to practice. During the experiment students could ask questions by using the chat functionality of the learning environment.
The questions that we would like to answer with this experiment are:
\begin{enumerate}
\item does LogInd behave as expected, i.e., provide hints and next steps, and give a correct diagnosis?
\item what kind of problems do students have while working with LogInd?
\item how do students use LogInd?
\item can we see effects of the use of LogInd in the way students perform pen-and-paper exercises?
\end{enumerate}  

In Section~\ref{hintgeneration} we claim that we can provide a hint or next step at any moment, also if a student does not follow the preferred strategy by LogInd. We analyze the loggings to check this claim. From the 1612 calls obtained from student interactions (a diagnosis, a hint, a next step, or a full solution), 
398 calls ask for a hint or a next step. In 25 cases (6\%) LogInd cannot provide such a step. Further analysis shows that since students repeat their call several times, this only happens in five different cases (1\%). In all of these cases LogInd could not provide a hint or next step because it diagnoses the exercise as `ready'. This was a bug, resulting from a use of LogInd that we had not anticipated: some students did not start a base case with the statement that they have to prove, but they submit the first two lines of a proof of the base case, for example the subproof:
\begin{hscode}\SaveRestoreHook
\column{B}{@{}>{\hspre}l<{\hspost}@{}}%
\column{3}{@{}>{\hspre}l<{\hspost}@{}}%
\column{E}{@{}>{\hspre}l<{\hspost}@{}}%
\>[3]{}\Varid{prop}\;(\Varid{p}){}\<[E]%
\\
\>[3]{}\mathrel{=}(\Varid{definition}\;\Varid{prop}){}\<[E]%
\\
\>[B]{}\mathrm{1}{}\<[E]%
\ColumnHook
\end{hscode}\resethooks
In such a case LogInd decided that this step was correct and that this subproof was finished since all steps are motivated, without checking whether indeed the base case has been proven. When a student asked for a hint after the other subproofs were (correctly) finished, LogInd could not provide such a hint. We repaired this bug, and since this was the only reason that no hint or next step was available, we expect that LogInd now indeed provides this kind of feed forward (i.e. hints and next steps) in all circumstances. 

We use the remarks made by students in the chat during the experiment and the loggings to answer the second question. From these remarks and the loggings we learned that students had quite a lot of problems with the interface. Students should start an exercise with a response to the question `what do you have to prove in the base case?'. They should fill in their answer in a template as shown in Figure~\ref{startbase}. For the first exercise, the exercise of the example in Section~\ref{terminology}, this means that on the first line, a student should enter \ensuremath{\Varid{prop}\;(\Varid{p})}, then choose the equality sign (=) from the drop-down list and enter the right-hand side \ensuremath{\Varid{bin}\;(\Varid{p})\mathbin{+}\mathrm{1}} in the bottom line. In their first attempt, none of the participating students entered this first step correctly. Ten students did not realize that they first had to answer this question before completing the proof or did not know what to prove. They provided answers such as for example \ensuremath{\Varid{prop}\;(\Varid{p})\mathrel{=}\mathrm{1}} or \ensuremath{\Varid{prop}\;(\phi)\mathrel{=}\Varid{bin}\;(\phi)\mathbin{+}\mathrm{1}}. Three students entered the whole statement \ensuremath{\Varid{prop}\;(\Varid{p})\mathrel{=}\Varid{bin}\;(\Varid{p})\mathbin{+}\mathrm{1}} on the first line, one student added a wrong justification (definition of the inductively defined function \ensuremath{\Varid{prop}}), and one student replaced the ? by an empty string, which at that time was not accepted by LogInd. 

A second source of problems was the use of the send button. A student only gets feedback after clicking this button. If a student enters several lines before clicking this button, it was hard to find the place where the feedback referred to, especially since at the time of the experiment we were still developing the constraints. So a student might receive the message `this step is not correct' without a clue which step should be repaired. 
Also, when a student asked for a hint after receiving an error message, this hint was based on the latest correct submission. Hence, this hint might relate to the application of a rule in (for example) a base case, while the student was working on an inductive case.

\begin{figure*}[t]
\center\includegraphics[width=9 cm]{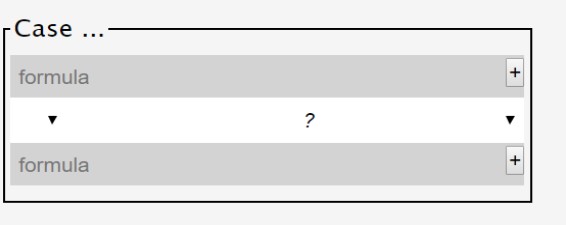}

\caption{Template for starting a base case} 
\label{startbase} 
\end{figure*}

To answer the third question, how do students use LogInd, we also looked at the loggings. The problems described in the previous section combined with the conceptual and technical problems students have with induction caused a high hint and next step use (25\% of the student interactions in the loggings). However, students did enter steps themselves and asked for a diagnosis (1078 requests, 67\%), where 619 steps were diagnosed as correct. Half of the participating students were able to construct (parts of) a subproof, some of them using hints or next steps in between, but the other students had too many problems with the interface. Students who solved the homework assignment before the experiment, could use LogInd without too many problems. 

To answer the last question, we analyzed resubmissions of students' homework. Students whose homework assignment was not correct had to submit the assignment again. We hoped to use these improved assignments as a kind of post test. However, three students submitted their corrections before the experiment, and one student did not submit a new version. The results of the remaining seven students can be found in Table~\ref{results3post}. We use the same characterizations of solutions as in Table~\ref{results3}, except for the solution  label `incorrect use of IH',
which did not occur in the first submissions. The first column shows the results on the first submissions of the homework assignment by the group of students who practiced with LogInd and submitted a new version after the experiment. The results of this second submission is shown in the second column. The last two columns show the same data for the group of students who did not practice with LogInd. The second submission is in both groups better than the first attempt. The number of students is too low to conclude if practicing with LogInd has more effect than the comments by the teacher on the first attempt. 

\begin{table}[t]
\begin{center}
\begin{tabular}{lrrrr}
Solution          & \multicolumn{2} {c} {LogInd (N = 7)} & \multicolumn{2}{c} {No LogInd (N = 4)}  \\ 
                 & 1st  & 2nd                            & 1st   & 2nd                      \\ \hline
Correct       & \hspace*{1cm}-- & 2 & \hspace*{1cm}-- & \hspace*{1cm}-- \\
Assumption \ensuremath{\phi} and \ensuremath{\psi} atomic & 3 & 1 & 4 & 1 \\
Assumption \ensuremath{\phi\mathbin{*}\mathrel{=}\phi} & 1 & 3 & 0 & 1 \\
Correct use of IH but incomplete & 1 & \hspace*{1cm}-- & -- & -- \\
Incorrect use of IH & -- & 1 & -- & -- \\
Correct ind def, no use of IH & 3 & 5 &  4& 2 \\

\end{tabular}
\end{center}
\caption{Results and mistakes in the first submission of homework assignment 3 and in the improved second submission (number of students)}
\label{results3post}
\end{table}

\section{Conclusion and future work}
\label{conclusion}
We have discussed the design of a tutoring system for learning how to prove statements about inductively defined discrete structures. As far as we know, LogInd is the first such tutoring system. A student constructs her proof stepwise, and LogInd provides help (hints, next steps, and elaborate feedback on errors) at each step. A pilot evaluation showed that LogInd indeed can provide help in almost all situations. Half of the students in the experiment could construct (parts of) a proof by themselves, but the other half had too many problems with the interface and the exercises. The number of students who participated in the experiment was too low to decide whether students indeed learn by using LogInd. We noticed that homework submissions by students who had practiced with LogInd were more clearly structured, and contained more justifications of different steps.

In a next experiment we will evaluate the constraints: is the information in a feedback message correct, and do these messages help students to correct their submission? We will test an alternative interface, where students can enter their steps in a text field, and we will also compare a guided version with a non-guided version. In the future we might also incorporate exercises about induction in other domains, such as for example lists or functional programs.

\section* {Acknowledgments} 
We thank Aad van Lieburg and Anja Paalvast
for their work on the student 
interface, and our students  for 
their permission to use their solutions in our research, and their participation in the experiment.

\begin{small}
\bibliographystyle{eptcs}
\bibliography{strategies,FeedbackServices,induction}
\end{small}

\newpage
\appendix
 \addcontentsline{toc}{section}{Appendices}
 
\section{Sketch of a completeness proof for the strategy used by LogInd}
\label{appendix}
We give a sketch of the completeness proof for the strategy used by LogInd. 
First we specify the class of functions that are used in in our exercises for transforming a formula in another formula. We call these inductively defined functions acceptable.
 
\newtheorem{mydef}{Definition}

\begin{mydef}
An inductively defined function \ensuremath{\Varid{g}} from the propositional language \ensuremath{\Conid{L}} to \ensuremath{\Conid{L}} is acceptable if the inductive definition can be written in the following form:
\begin{itize}
\item \makebox[1.8cm][l]{\ensuremath{\Varid{g}\;(p_i)}}           \ensuremath{\mathrel{=}\phi_i} \quad\quad for atomic formula \ensuremath{p_i}
\item \makebox[1.8cm][l]{\ensuremath{\Varid{g}\;(\neg\phi)}}       \ensuremath{\mathrel{=}[\mskip1.5mu \Varid{g}\;(\phi)\mathbin{/}\Varid{s}\mskip1.5mu]\;\psi}
\item \makebox[1.8cm][l]{\ensuremath{\Varid{g}\;(\phi_1\;\Box\;\phi_2)}} \ensuremath{\mathrel{=}[\mskip1.5mu \Varid{g}\;(\phi_1)\mathbin{/}s_1,\Varid{g}\;(\phi_2)\mathbin{/}s_2\mskip1.5mu]\;\psi_{\Box}}
\end{itize}
\end{mydef}

\vspace*{2mm}
The definition states that the inductive cases are obtained by substituting a variable \ensuremath{\Varid{s}} by \ensuremath{\Varid{g}\;(\phi)}  in a formula \ensuremath{\psi}, or the variables \ensuremath{s_1} and \ensuremath{s_2} by  \ensuremath{\Varid{g}\;(\phi_1)} and \ensuremath{\Varid{g}\;(\phi_2)} in a formula \ensuremath{\psi_{\Box}}.
All inductive definitions of functions from \ensuremath{\Conid{L}} to \ensuremath{\Conid{L}} can be written this way. For example, the function \ensuremath{\Varid{star}} in the second example in Section~\ref{interface} can be defined by:
\begin{hscode}\SaveRestoreHook
\column{B}{@{}>{\hspre}l<{\hspost}@{}}%
\column{3}{@{}>{\hspre}l<{\hspost}@{}}%
\column{20}{@{}>{\hspre}c<{\hspost}@{}}%
\column{20E}{@{}l@{}}%
\column{23}{@{}>{\hspre}l<{\hspost}@{}}%
\column{E}{@{}>{\hspre}l<{\hspost}@{}}%
\>[3]{}\Varid{g}\;(p_i){}\<[20]%
\>[20]{}\mathrel{=}{}\<[20E]%
\>[23]{}\neg\;p_i{}\<[E]%
\\
\>[3]{}\Varid{g}\;(\neg\phi){}\<[20]%
\>[20]{}\mathrel{=}{}\<[20E]%
\>[23]{}[\mskip1.5mu \Varid{g}\;(\phi)\mathbin{/}\Varid{s}\mskip1.5mu]\;(\neg\Varid{s}){}\<[E]%
\\
\>[3]{}\Varid{g}\;(\phi_1\mathrel{\wedge}\phi_2){}\<[20]%
\>[20]{}\mathrel{=}{}\<[20E]%
\>[23]{}[\mskip1.5mu \Varid{g}\;(\phi_1)\mathbin{/}s_1,\Varid{g}\;(\phi_2)\mathbin{/}s_2\mskip1.5mu]\;(s_1\mathrel{\vee}s_2){}\<[E]%
\\
\>[3]{}\Varid{g}\;(\phi_1\mathrel{\vee}\phi_2){}\<[20]%
\>[20]{}\mathrel{=}{}\<[20E]%
\>[23]{}[\mskip1.5mu \Varid{g}\;(\phi_1)\mathbin{/}s_1,\Varid{g}\;(\phi_2)\mathbin{/}s_2\mskip1.5mu]\;(s_1\mathrel{\wedge}s_2){}\<[E]%
\ColumnHook
\end{hscode}\resethooks

In our proof we need the following lemma:
\newtheorem{lemma}{Lemma}
\begin{lemma}
\label{substitution}
For any counting function \ensuremath{\Varid{f}} and formulae \ensuremath{\phi} and \ensuremath{\psi},  
\ensuremath{\Varid{f}\;([\mskip1.5mu \phi\mathbin{/}\Varid{p}\mskip1.5mu]\;\psi)} is a linear expression in \ensuremath{\Varid{f}\;(\phi)}.
\end{lemma}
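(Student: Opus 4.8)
The plan is to prove the statement by structural induction on $\psi$, where I read ``linear expression in $f(\phi)$'' as an affine expression $\alpha + \beta\cdot f(\phi)$ whose coefficients $\alpha,\beta$ depend only on $f$ and $\psi$, and not on $\phi$. (This independence is exactly what later makes it legitimate to use the lemma in combination with treating the induction hypothesis as a rewrite rule.)

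First I would treat the base case, in which $\psi$ is an atom, splitting it in two. If $\psi$ is the atom $p$ that is being substituted, then $[\phi/p]\psi = \phi$, so $f([\phi/p]\psi) = f(\phi) = 0 + 1\cdot f(\phi)$. If $\psi$ is a different atom $q \neq p$, then $[\phi/p]\psi = q$ and $f(q) = c_q$ is a constant, i.e.\ $c_q + 0\cdot f(\phi)$. In both cases we obtain an affine expression in $f(\phi)$.

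Next I would carry out the inductive step using the two recursive clauses of a counting function. If $\psi = \neg\psi_1$, substitution commutes with the connective, so $[\phi/p]\psi = \neg([\phi/p]\psi_1)$; applying the definition of $f$ and the induction hypothesis $f([\phi/p]\psi_1) = \alpha_1 + \beta_1 f(\phi)$ gives $f([\phi/p]\psi) = a + b(\alpha_1 + \beta_1 f(\phi))$, which is affine in $f(\phi)$. If $\psi = \psi_1 \Box \psi_2$, the same argument applied to both subformulae yields $f([\phi/p]\psi) = a_\Box + b_\Box(\alpha_1 + \beta_1 f(\phi)) + c_\Box(\alpha_2 + \beta_2 f(\phi))$, and collecting the constant terms and the $f(\phi)$ terms again produces an affine expression. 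These cases are exhaustive, so the induction closes.

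I do not expect a genuine obstacle: the argument is a routine structural induction. The only points that need a little care are (i) splitting the atomic base case according to whether the atom is the one being replaced, since only in the first sub-case does $f(\phi)$ actually appear, and (ii) bookkeeping the coefficients so as to make explicit that they are assembled from the fixed constants of $f$ together with the (finite) structure of $\psi$, and therefore do not depend on $\phi$.
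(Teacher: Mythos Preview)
Your proposal is correct and follows essentially the same route as the paper's proof: structural induction on $\psi$, splitting the atomic base case into the substituted atom versus any other atom, and in the inductive cases pushing the substitution through the connective before applying the definition of $f$ and the induction hypothesis. Your extra remark that the resulting coefficients depend only on $f$ and $\psi$ (not on $\phi$) is a useful sharpening the paper leaves implicit.
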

\begin{proof}
Proof with induction on \ensuremath{\psi}:\\
Since \ensuremath{\Varid{f}} is a counting function, there exists constants \ensuremath{c_i}, \ensuremath{\Varid{a}}, \ensuremath{\Varid{b}}, \ensuremath{a_{\Box}}, \ensuremath{b_{1\Box}}, and \ensuremath{b_{2\Box}} such that
\begin{hscode}\SaveRestoreHook
\column{B}{@{}>{\hspre}l<{\hspost}@{}}%
\column{3}{@{}>{\hspre}l<{\hspost}@{}}%
\column{21}{@{}>{\hspre}c<{\hspost}@{}}%
\column{21E}{@{}l@{}}%
\column{24}{@{}>{\hspre}l<{\hspost}@{}}%
\column{E}{@{}>{\hspre}l<{\hspost}@{}}%
\>[3]{}\Varid{f}\;(p_i){}\<[21]%
\>[21]{}\mathrel{=}{}\<[21E]%
\>[24]{}c_i,{}\<[E]%
\\
\>[3]{}\Varid{f}\;(\neg\phi){}\<[21]%
\>[21]{}\mathrel{=}{}\<[21E]%
\>[24]{}\Varid{a}\mathbin{+}\Varid{b}\cdot\Varid{f}\;(\phi),{}\<[E]%
\\
\>[3]{}\Varid{f}\;(\phi_1\;\Box\;\phi_2){}\<[21]%
\>[21]{}\mathrel{=}{}\<[21E]%
\>[24]{}a_{\Box}\mathbin{+}b_{1\Box}\cdot\Varid{f}\;(\phi_1)\mathbin{+}b_{2\Box}\cdot\Varid{f}\;(\phi_2){}\<[E]%
\ColumnHook
\end{hscode}\resethooks
For atomic formulae \ensuremath{\psi}, \ensuremath{\Varid{f}\;([\mskip1.5mu \phi\mathbin{/}\Varid{p}\mskip1.5mu]\;\psi)} is either \ensuremath{\Varid{f}\;(p_i)} (a constant) or \ensuremath{\Varid{f}\;(\phi)} and hence linear in \ensuremath{\Varid{f}\;(\phi)}.

\par\noindent
For the inductive cases we assume that \ensuremath{\Varid{f}\;([\mskip1.5mu \phi\mathbin{/}\Varid{p}\mskip1.5mu]\;\psi_1)\mathrel{=}\mbox{$c_1$}\mathbin{+}\mbox{$d_1$}\cdot\Varid{f}\;(\phi)} and \ensuremath{\Varid{f}\;([\mskip1.5mu \phi\mathbin{/}\Varid{p}\mskip1.5mu]\;\psi_2)\mathrel{=}c_2\mathbin{+}d_2\cdot\Varid{f}\;(\phi)}.

\par\noindent
Then:
\begin{hscode}\SaveRestoreHook
\column{B}{@{}>{\hspre}l<{\hspost}@{}}%
\column{3}{@{}>{\hspre}c<{\hspost}@{}}%
\column{3E}{@{}l@{}}%
\column{6}{@{}>{\hspre}l<{\hspost}@{}}%
\column{E}{@{}>{\hspre}l<{\hspost}@{}}%
\>[6]{}\Varid{f}\;([\mskip1.5mu \phi\mathbin{/}\Varid{p}\mskip1.5mu]\;(\neg\psi_1)){}\<[E]%
\\
\>[3]{}\mathrel{=}{}\<[3E]%
\>[6]{}\Varid{f}\;(\neg[\mskip1.5mu \phi\mathbin{/}\Varid{p}\mskip1.5mu]\;\psi_1){}\<[E]%
\\
\>[3]{}\mathrel{=}{}\<[3E]%
\>[6]{}\Varid{a}\mathbin{+}\Varid{b}\cdot\Varid{f}\;([\mskip1.5mu \phi\mathbin{/}\Varid{p}\mskip1.5mu]\;\psi_1){}\<[E]%
\\
\>[3]{}\mathrel{=}{}\<[3E]%
\>[6]{}\Varid{a}\mathbin{+}\Varid{b}\cdot(\mbox{$c_1$}\mathbin{+}\mbox{$d_1$}\cdot\Varid{f}\;(\phi)){}\<[E]%
\\
\>[3]{}\mathrel{=}{}\<[3E]%
\>[6]{}\Varid{a}\mathbin{+}\Varid{b}\cdot\mbox{$c_1$}\mathbin{+}\Varid{b}\cdot\mbox{$d_1$}\cdot\Varid{f}\;(\phi){}\<[E]%
\ColumnHook
\end{hscode}\resethooks
And:\begin{hscode}\SaveRestoreHook
\column{B}{@{}>{\hspre}l<{\hspost}@{}}%
\column{3}{@{}>{\hspre}c<{\hspost}@{}}%
\column{3E}{@{}l@{}}%
\column{6}{@{}>{\hspre}l<{\hspost}@{}}%
\column{22}{@{}>{\hspre}l<{\hspost}@{}}%
\column{27}{@{}>{\hspre}l<{\hspost}@{}}%
\column{45}{@{}>{\hspre}l<{\hspost}@{}}%
\column{63}{@{}>{\hspre}l<{\hspost}@{}}%
\column{81}{@{}>{\hspre}l<{\hspost}@{}}%
\column{E}{@{}>{\hspre}l<{\hspost}@{}}%
\>[6]{}\Varid{f}\;([\mskip1.5mu \phi\mathbin{/}\Varid{p}\mskip1.5mu]\;(\psi_1\;\Box\;\psi_2)){}\<[E]%
\\
\>[3]{}\mathrel{=}{}\<[3E]%
\>[6]{}\Varid{f}\;([\mskip1.5mu \phi\mathbin{/}\Varid{p}\mskip1.5mu]\;\psi_1\;{}\<[22]%
\>[22]{}\Box\;[\mskip1.5mu \phi\mathbin{/}\Varid{p}\mskip1.5mu]\;\psi_2){}\<[E]%
\\
\>[3]{}\mathrel{=}{}\<[3E]%
\>[6]{}a_{\Box}\mathbin{+}b_{1\Box}\cdot\Varid{f}\;([\mskip1.5mu \phi\mathbin{/}\Varid{p}\mskip1.5mu]\;\psi_1)\mathbin{+}b_{2\Box}\cdot\Varid{f}\;([\mskip1.5mu \phi\mathbin{/}\Varid{p}\mskip1.5mu]\;\psi_2){}\<[E]%
\\
\>[3]{}\mathrel{=}{}\<[3E]%
\>[6]{}a_{\Box}\mathbin{+}b_{1\Box}\cdot(\mbox{$c_1$}\mathbin{+}\mbox{$d_1$}\cdot\Varid{f}\;(\phi))\mathbin{+}b_{2\Box}\cdot(c_2\mathbin{+}d_2\cdot\Varid{f}\;(\phi)){}\<[E]%
\\
\>[3]{}\mathrel{=}{}\<[3E]%
\>[6]{}a_{\Box}\mathbin{+}b_{1\Box}\cdot{}\<[27]%
\>[27]{}\mbox{$c_1$}\mathbin{+}b_{2\Box}\cdot{}\<[45]%
\>[45]{}c_2\mathbin{+}(b_{1\Box}\cdot{}\<[63]%
\>[63]{}\mbox{$d_1$}\mathbin{+}b_{2\Box}\cdot{}\<[81]%
\>[81]{}d_2)\cdot\Varid{f}\;(\phi){}\<[E]%
\ColumnHook
\end{hscode}\resethooks
\end{proof}

In the same way we can prove that for any counting function \ensuremath{\Varid{f}} and formulae \ensuremath{\phi_1}, \ensuremath{\phi_2} and \ensuremath{\psi},  
\ensuremath{\Varid{f}\;([\mskip1.5mu \phi_1\mathbin{/}s_1,\phi_2\mathbin{/}s_2\mskip1.5mu]\;\psi)} is a linear expression in \ensuremath{\Varid{f}\;(\phi_1)} and \ensuremath{\Varid{f}\;(\phi_2)}. We omit the proof. Using Lemma~\ref{substitution} we can prove the next lemma:

\begin{lemma}
\label{linear}
For any counting function \ensuremath{\Varid{f}} and acceptable inductively defined function \ensuremath{\Varid{g}}, after applying \ensuremath{\Varid{g}}
\begin{itize}
\item \ensuremath{\Varid{f}\;(\Varid{g}\;(p_i))} is a constant for atomic formulae \ensuremath{p_i}
\item \ensuremath{\Varid{f}\;(\Varid{g}\;(\neg\phi))} is a linear expression in \ensuremath{\Varid{f}\;(\Varid{g}\;(\phi))}
\item \ensuremath{\Varid{f}\;(\Varid{g}\;(\phi_1\;\Box\;\phi_2))} is a linear expression in \ensuremath{\Varid{f}\;(\Varid{g}\;(\phi_1))} and \ensuremath{\Varid{f}\;(\Varid{g}\;(\phi_2))}
\end{itize}
\end{lemma}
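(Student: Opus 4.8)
The plan is to derive Lemma~\ref{linear} directly from the definition of an acceptable function together with Lemma~\ref{substitution} and its two-variable companion stated just afterwards. No new induction is needed at this level: all the inductive work is already encapsulated in Lemma~\ref{substitution}, so what remains is a case analysis on the three clauses in the definition of acceptability.

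First I would dispatch the atomic case. By the definition of acceptability, $g(p_i) = \phi_i$ for a fixed formula $\phi_i$ that contains no formula variable. Hence $f(g(p_i)) = f(\phi_i)$, and since $f$ is a counting function and $\phi_i$ is a concrete formula, $f(\phi_i)$ evaluates to a fixed natural number, i.e.\ a constant.

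Next, for the negation case I would unfold $g(\neg\phi) = [\,g(\phi)/s\,]\,\psi$, where $\psi$ is the fixed formula occurring in the definition of $g$. Applying Lemma~\ref{substitution} with $g(\phi)$ in the role of the formula being substituted in and $s$ in the role of the atom, $f([\,g(\phi)/s\,]\,\psi)$ is a linear expression in $f(g(\phi))$, which is precisely the claim. The binary case is analogous: unfolding $g(\phi_1 \Box \phi_2) = [\,g(\phi_1)/s_1, g(\phi_2)/s_2\,]\,\psi_{\Box}$ and invoking the two-variable version of Lemma~\ref{substitution}, the quantity $f(g(\phi_1 \Box \phi_2))$ comes out linear in $f(g(\phi_1))$ and $f(g(\phi_2))$.

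I do not expect a genuine obstacle here; the lemma is essentially a repackaging of Lemma~\ref{substitution} through the definition of acceptability. The only point deserving a little care is the instantiation of that lemma: the formula being substituted in is $g(\phi)$ (respectively $g(\phi_1)$ and $g(\phi_2)$), which is itself an ordinary formula, so the hypotheses of Lemma~\ref{substitution} are satisfied irrespective of the internal structure of $g(\phi)$, and in particular we never need to inspect $g(\phi)$ itself.
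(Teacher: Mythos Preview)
Your proposal is correct and matches the paper's own proof essentially step for step: a case analysis on the three clauses of acceptability, with the atomic case yielding a constant because $g(p_i)$ is a fixed formula, and the negation and binary-connective cases reduced to Lemma~\ref{substitution} (respectively its two-variable version) after unfolding $g$. The extra remark that no fresh induction is required and that the hypotheses of Lemma~\ref{substitution} are met because $g(\phi)$ is an ordinary formula is a helpful clarification but not a departure from the paper's argument.
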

\begin{proof}

For atomic formulae,  \ensuremath{\Varid{g}\;(p_i)} is a propositional formula, and hence, \ensuremath{\Varid{f}\;(\Varid{g}\;(p_i))} is a constant. \\
Since \ensuremath{\Varid{g}} is acceptable, there exists a formula \ensuremath{\psi} and variable \ensuremath{\Varid{s}} such that \ensuremath{\Varid{f}\;(\Varid{g}\;(\neg\phi))\mathrel{=}\Varid{f}\;([\mskip1.5mu \Varid{g}\;(\phi)\mathbin{/}\Varid{s}\mskip1.5mu]\;\psi}), which is linear in \ensuremath{\Varid{f}\;(\Varid{g}\;(\phi))} according to Lemma~\ref{substitution}. \\
In the same way: there exists a formula \ensuremath{\psi_{\Box}} and variables \ensuremath{s_1} and \ensuremath{s_2} such that \ensuremath{\Varid{f}\;(\Varid{g}\;(\phi_1\;\Box\;\phi_2))\mathrel{=}\Varid{f}\;([\mskip1.5mu \Varid{g}\;(\phi_1)\mathbin{/}s_1,\Varid{g}\;(\phi_2)\mathbin{/}s_2\mskip1.5mu]\;\psi_{\Box})},  which is linear in \ensuremath{\Varid{f}\;(\Varid{g}\;(\phi_1))} and \ensuremath{\Varid{f}\;(\Varid{g}\;(\phi_2))} by the generalization of Lemma~\ref{substitution}.
\end{proof}

\newtheorem{thm}{Theorem}
\begin{thm}
The strategy used by LogInd can generate an inductive proof for any correct statement of the form \ensuremath{P_1\;(\phi)}   \textit{comp} \ensuremath{P_2\;(\phi)},  where \textit{comp} is a comparator (\ensuremath{\mathrel{=},\mathbin{<},\leq ,\mathbin{>},\geq }), \ensuremath{P_2\;(\phi)} is a linear combination of expressions \ensuremath{f_i\;(g_i\;(\phi))}, and \ensuremath{P_1\;(\phi)} is a single expression \ensuremath{\Varid{f}\;(\Varid{g}\;(\phi))}, for which \ensuremath{\Varid{f}} and \ensuremath{f_i}  are  counting functions, and \ensuremath{\Varid{g}} and \ensuremath{g_i} are inductively defined functions from \ensuremath{\Conid{L}} to \ensuremath{\Conid{L}}.
\end{thm}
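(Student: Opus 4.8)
The plan is to prove the theorem by structural induction on the shape of the proof that the strategy constructs, which by design mirrors the inductive definition of \ensuremath{\Conid{L}}: base cases for the atoms and an inductive case for each connective. Building this skeleton is purely a matter of reading off the language definition, so the only real content is showing that every subproof the strategy generates actually closes. The workhorse for this is Lemma~\ref{linear} (applicable because, by the remark following Definition~1, every inductively defined function from \ensuremath{\Conid{L}} to \ensuremath{\Conid{L}} is acceptable): once the strategy has rewritten with the definitions of \ensuremath{\Varid{g}}, \ensuremath{\Varid{f}} and the \ensuremath{g_i}, \ensuremath{f_i}, every term of the form \ensuremath{\Varid{f}\;(\Varid{g}\;(\cdot))} or \ensuremath{f_i\;(g_i\;(\cdot))} has become a linear expression in the corresponding ``one level down'' quantity, with coefficients that (via Lemma~\ref{substitution}) are sums and products of the natural-number coefficients in the counting-function definitions, hence themselves non-negative integers.

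The base case is immediate: for an atom \ensuremath{p_i}, Lemma~\ref{linear} makes \ensuremath{\Varid{f}\;(\Varid{g}\;(p_i))} and each \ensuremath{f_i\;(g_i\;(p_i))} constants, so after applying the definitions both \ensuremath{P_1\;(p_i)} and \ensuremath{P_2\;(p_i)} are concrete natural numbers and the comparison is settled by the normalizing calculation; since the statement is correct it holds at \ensuremath{\phi\mathrel{=}p_i}, so this subproof closes. For an inductive case --- take the unary case for \ensuremath{\neg}, the binary case being identical except that the induction hypothesis is used for both immediate subformulas --- the strategy starts from \ensuremath{P_1\;(\neg\phi)} \textit{comp} \ensuremath{P_2\;(\neg\phi)} and proceeds in four phases: (i) rewrite with the rules for \ensuremath{\Varid{g},g_i} and then \ensuremath{\Varid{f},f_i}, so that by Lemma~\ref{linear} the left-hand side becomes \ensuremath{\Varid{a}\mathbin{+}\Varid{b}\cdot\Varid{f}\;(\Varid{g}\;(\phi))} and the right-hand side a linear combination of the \ensuremath{f_i\;(g_i\;(\phi))} plus a constant; (ii) distribute and collect so that both sides are linear combinations of exactly the terms occurring in the induction hypothesis \ensuremath{\Varid{f}\;(\Varid{g}\;(\phi))} \textit{comp} \ensuremath{P_2\;(\phi)}; (iii) perform weak fertilization, replacing the single term \ensuremath{\Varid{f}\;(\Varid{g}\;(\phi))} on the left by \ensuremath{P_2\;(\phi)} --- legitimate because its coefficient \ensuremath{\Varid{b}} is \ensuremath{\geq \mathrm{0}}, so the rewrite is monotone in the direction of \textit{comp} (exact when \textit{comp} is \ensuremath{\mathrel{=}}, and weakening the left-hand side the right way for each of the inequality comparators); and (iv) discharge what remains, a comparison between two linear combinations of the \ensuremath{f_i\;(g_i\;(\phi))}, by normalization.

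The main obstacle is phase (iv): one must argue that this residual (in)equality is always one that normalization actually verifies, i.e.\ that it holds with the \ensuremath{f_i\;(g_i\;(\phi))} regarded as formal unknowns. The route I would take is to observe that the strategy's rewrites re-express \ensuremath{P_1\;(\neg\phi)} and \ensuremath{P_2\;(\neg\phi)} faithfully, and that correctness of the statement at the formula \ensuremath{\neg\phi} --- together with correctness at \ensuremath{\phi}, which is exactly the induction hypothesis --- forces the two reduced linear forms to agree as functions of \ensuremath{\phi}; the delicate point is to upgrade ``agree as functions'' to ``agree as formal expressions,'' for which the rigid shape of counting functions (each \ensuremath{f_i\circ g_i} obeys a linear recurrence with a determined \ensuremath{\neg}-coefficient) has to be exploited. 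I expect this to be the step that needs the most care; the remaining ingredients --- assembling the skeleton from the language, the base case, and the monotonicity of weak fertilization --- are routine given Lemmas~\ref{substitution} and~\ref{linear}. Finally I would note termination: each of the four phases strictly decreases an evident measure on the proof state (unexpanded function symbols, then the degree of the expressions, then the number of fertilizable occurrences), so the strategy produces a finite, complete proof.
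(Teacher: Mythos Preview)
Your proposal follows essentially the same route as the paper's (sketch) proof: invoke Lemma~\ref{linear} to reduce both sides, after unfolding the definitions, to linear expressions in the \ensuremath{f_i\;(g_i\;(\phi))} (resp.\ in the \ensuremath{f_i\;(g_i\;(\phi_1))} and \ensuremath{f_i\;(g_i\;(\phi_2))}), apply the induction hypothesis as a left-to-right rewrite (weak fertilization), and finish by normalizing the resulting linear forms. You are in fact more careful than the paper's sketch---the paper neither checks monotonicity of the fertilization step for the inequality comparators nor argues why normalization of the residual linear forms must succeed, which is exactly the phase~(iv) subtlety you single out---so your flagging that step as the one needing care is appropriate rather than a deficiency.
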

\begin{proof}
The strategy starts with the application of the inductively defined functions. We first consider the base case. After the application of inductively defined functions, the left-hand side of the equation is a constant and the right-hand side a linear combination of constants, which is rewritten into a single constant using arithmetic, by Lemma~\ref{linear}. A number comparison suffices to check if the base case holds.
In the inductive cases, the application of the inductively defined function in the left-hand side results in a linear  expression in \ensuremath{\Varid{f}\;(\Varid{g}\;(\phi))} (case negation) or \ensuremath{\Varid{f}\;(\Varid{g}\;(\phi_1))} and \ensuremath{\Varid{f}\;(\Varid{g}\;(\phi_2))} (case binary connective). The right-hand side is a linear combination of linear expressions in \ensuremath{f_i\;(g_i\;(\phi))} or in \ensuremath{f_i\;(g_i\;(\phi_1))} and \ensuremath{f_i\;(g_i\;(\phi_2))}. The next step in the algorithm is the application of the induction hypothesis. Replacing occurrences of \ensuremath{\Varid{f}\;(\Varid{g}\;(\phi))} or \ensuremath{\Varid{f}\;(\Varid{g}\;(\phi_1))} and \ensuremath{\Varid{f}\;(\Varid{g}\;(\phi_2))} by the right-hand side of the induction hypothesis results in another linear combination of \ensuremath{f_i\;(g_i\;(\phi))} or \ensuremath{f_i\;(g_i\;(\phi_1))} and \ensuremath{f_i\;(g_i\;(\phi_2))}. Normalizing both left-hand side and right-hand side now suffices to prove the inductive cases.
\end{proof}

\end{document}
